\let\MYcaption\@makecaption
\let\@makecaption\MYcaption
\long\def\@makecaption#1#2{%
  \ifx\@captype\@IEEEtablestring%
    \@IEEEtablecaptionsepspace
    \setbox\@tempboxa\hbox{\normalfont\footnotesize #1.\nobreakspace\nobreakspace #2}%
    \ifdim \wd\@tempboxa >\hsize%
      \setbox\@tempboxa\hbox{\normalfont\footnotesize #1.\nobreakspace\nobreakspace}%
      \parbox[t]{\hsize}{\normalfont\footnotesize \noindent\unhbox\@tempboxa #2}%
    \else%
      \hbox to\hsize{\normalfont\footnotesize \box\@tempboxa\hfil}%
    \fi%
  \else
    \@IEEEfigurecaptionsepspace
    \setbox\@tempboxa\hbox{\normalfont\footnotesize #1.\nobreakspace\nobreakspace #2}%
    \ifdim \wd\@tempboxa >\hsize%
      \setbox\@tempboxa\hbox{\normalfont\footnotesize #1.\nobreakspace\nobreakspace}%
      \parbox[t]{\hsize}{\normalfont\footnotesize \noindent\unhbox\@tempboxa #2}%
    \else%
      \hbox to\hsize{\normalfont\footnotesize \box\@tempboxa\hfil}%
    \fi%
  \fi}
\pgfplotsset{compat=1.10}
\newtheorem{theorem}{Theorem}
\newtheorem{lemma}[theorem]{Lemma}
\newtheorem{assumption}{Assumption}
\newtheorem{remark}{Remark}
\newcommand{\norm}[1]{\left\lVert #1 \right\rVert}
\newcolumntype{P}[1]{>{\centering\arraybackslash}p{#1}}
\newcolumntype{s}[1]{>{\columncolor{lightgray!20}\centering\arraybackslash}p{#1}}
\newcommand{\ie}{i\/.\/e\/.\/,~}
\newcommand{\eg}{e\/.\/g\/.\/,~}
\newcommand{\fig}{Fig\/.\/~}
\newcommand{\tabe}{Table~}
\newcommand{\sect}{Sec\/.\/~}
\newcommand{\algo}{Algorithm~}
\newcommand{\alg}{Alg\/.\/~}
\newcommand{\theo}{Theorem~}
\newcommand{\lem}{Lemma~}
\newif\ifcommentjohannes
\def\BibTeX{{\rm B\kern-.05em{\sc i\kern-.025em b}\kern-.08em
    T\kern-.1667em\lower.7ex\hbox{E}\kern-.125emX}}
\newif\ifshowchanges
\newcommand{\rev}[1]{%
  \ifshowchanges
    {\color{blue}#1}%
  \else
    #1%
  \fi
}
\begin{document}
\title{Approximate non-linear model predictive control \\with safety-augmented neural networks}
\author{Henrik Hose, Johannes K{\"o}hler, Melanie N. Zeilinger, and Sebastian Trimpe
\thanks{This work is funded in part by the DFG RTG 2236 “UnRAVeL” and by the Swiss National Science Foundation under NCCR Automation (grant agreement 51NF40 180545).
Simulations were performed with computing resources granted by RWTH Aachen University under project RWTH1288.}
\thanks{H.~Hose and S.~Trimpe are with the Institute for Data Science in Mechanical Engineering, RWTH Aachen University, Germany (e-mail: henrik.hose@dsme.rwth-aachen.de; trimpe@dsme.rwth-aachen.de).}
\thanks{J. K{\"o}hler and M.~N. Zeilinger are with the Institute for Dynamic Systems and Control, ETH Z{\"u}rich, Switzerland (e-mail: jkoehle@ethz.ch; mzeilinger@ethz.ch).}}

\maketitle

\IEEEpubid{
\begin{minipage}{\textwidth}\ \\[48pt]
\scriptsize
Accepted final version. Accepted for publication in: \textit{IEEE Transactions on Control Systems Technology}, 2025. DOI: 10.1109/TCST.2025.3590268\\
© 2025 IEEE.
Personal use of this material is permitted.
Permission from IEEE must be obtained for all other uses, in any current or future media, including reprinting/republishing this material for advertising or promotional purposes, creating new collective works, for resale or redistribution to servers or lists, or reuse of any copyrighted component of this work in other works.
\end{minipage}
}

\begin{abstract}
    Model predictive control (MPC) achieves stability and constraint satisfaction for general nonlinear systems, but requires computationally expensive online optimization.
    This paper studies approximations of such~MPC controllers via neural networks~(NNs) to achieve fast online evaluation.
    We propose safety augmentation that yields deterministic guarantees for convergence and constraint satisfaction despite approximation inaccuracies.
    We approximate the entire input sequence of the~MPC with NNs, which allows us to verify online if it is a feasible solution to the MPC problem.
    We replace the NN solution by a safe candidate based on standard MPC techniques whenever it is infeasible or has worse cost.
    Our method requires a single evaluation of the NN and forward integration of the input sequence online, which is fast to compute on resource-constrained systems\rev{, typically within 0.2ms}.
    The proposed control framework is illustrated using \rev{three} numerical non-linear~MPC benchmarks of different complexity, demonstrating computational speedups that are orders of magnitude higher than online optimization.
    In the examples, we achieve deterministic safety through the safety-augmented NNs, where a naive NN implementation fails.
\end{abstract}

\begin{IEEEkeywords}
Non-linear model predictive control; approximate~MPC; machine learning; constrained control
\end{IEEEkeywords}

\section{Introduction}
\IEEEPARstart{M}{odel} predictive control~(MPC) is an advanced control strategy for non-linear systems that provides theoretical guarantees for constraint satisfaction and stability~\cite{rawlings2017model}.
Convenient software packages are available to implement the required constrained optimization problems and parse them for dedicated numerical solvers, \eg\cite{verschueren2022acados}.
However, practical applications remain limited by long computing times in resource-constrained systems and, for general non-linear systems, it is difficult to acquire a priori certification of computation time bounds.
Neural networks (NNs) have been successfully trained to imitate (approximate) the behavior of MPC schemes~\cite{parisini1998nonlinear,aakesson2006neural,hertneck2018learning,nubert2020safe,karg2021probabilistic,klauvco2019machine,vaupel2020accelerating, zhang2020near,wang2023policy,chen2018approximating,paulson2020approximate,drgona2020learning, drgovna2022learning, cao2020deep, li2022using, tabas2022safe, carius2020mpc, reske2021imitation}, resulting in fast online sampling suitable for embedded applications\rev{, see \cite{gonzalez2023neural} for a recent survey that highlights the numerical speed-ups of such approximate MPCs for different problems.}
Yet, providing hard guarantees for constraint satisfaction and convergence for general, non-linear systems remains challenging for such approximate~MPCs.
\begin{figure}[!ht]
\center
\begin{tikzpicture}

    \draw[draw=black,fill=lightgray!20,thick, name=AUG] (-0.5,1.1) rectangle ++(5.9,-3.8);
    \node[align=center, font={\small}] at (2.4,0.8){\textbf{Safety-augmented NN}};
    
    \node [rectangle, thick, draw, align=center, fill=white, minimum height = 1cm, text width=5cm, font={\small}] (NN) at (2.4,0) 
    {\textbf{Approximate MPC} (NN)\\infer input sequence $\Pi_\text{NN}(x)\in\mathcal{U}^N$};
    
    \node [rectangle, thick, draw, fill=white, align=center, minimum height = 1cm, text width=2cm, font={\small}] (SYS) at (-1.8,-0.65) 
    {\textbf{System}\\$x^+=f(x,u)$};
    
    \node[align=left, font={\small}] at (2.85,-2.35){Candidate $\{\pmb{u}_{1:N-1},K_\text{f}(\cdot)\}\in\mathcal{U}^N(x)$};

    \node [rectangle,
        thick,
        draw,
        align=center,
        minimum height = 1cm,
        text width=5cm,
        fill=white,
        font={\small}] (SOEB) at (2.4,-1.4) 
        {\textbf{Safe online evaluation}\\
        if~$\Pi_\text{NN}$ is safe and stable, $\pmb{u}=\Pi_{\text{NN}}$, \\
        else use candidate as fallback};

    \begin{scope}[
        every node/.style={rectangle, align=center, font={\scriptsize}},
        every edge/.style={thick, draw},
        every path/.style={thick, draw}
    ]
        \draw [->] (SOEB.west) -| (SYS.south);
        \node at (-1.0,0.15) {$x$};
        \node at (-1.0,-1.6) {$\pmb{u}_0$};
        \draw [<-] (SOEB.north) -- (NN.south);
        \draw [->] (SYS.north) |- (NN.west);

        \draw [->] (SOEB.-165) to [loop, out=-65,in=-115, min distance=0.8cm](SOEB.-166.8);

    \end{scope}

\end{tikzpicture}
\caption{Approximate MPC with safety-augmented NN. Input sequences from a NN controller are checked by the proposed safe online evaluation (see \alg\ref{alg:online-validation}). If they yield constraint satisfaction and stability, they are applied to the system. Otherwise, a safe candidate obtained from the last time step -- shifted with appended terminal controller -- is used as fallback.}\label{fig:mainfig}
\vspace{-1em}
\end{figure}

\rev{\subsection{Related work}\label{sec:related-work}}
\rev{Explicit solutions to~MPC problems can be obtained exactly for medium-sized linear systems~\cite{bemporad2002explicit} and approximately for non-linear and large linear systems (e.g., with NNs~\cite{parisini1998nonlinear,aakesson2006neural} or multi-parametric programming~\cite{johansen2004approximate}). 
Due to the approximation error, the resulting approximate MPC does in general not guarantee constraint satisfaction, or guarantees are only valid under sufficiently small approximation errors, that are difficult to verify~\cite{chakrabarty2016support,canale2009set}.
In the following, we discuss existing methods to ensure constraint satisfaction with approximate MPC and classical results on suboptimal~MPC similar to the proposed method. }

\textbf{Robust with respect to approximation:}
One way to account for inevitable approximation errors is to explicitly add robustness to the underlying MPC scheme.
\rev{In~\cite{pin2013approximate}, constraint satisfaction and stability are shown with Lipschitz-based constraint tightening~(see~\cite{manzano2020robust,marruedo2002input}), but the required admissible approximation error is typically too small to be achieved in practice.
In~\cite{hertneck2018learning, nubert2020safe}, more advanced robustification (see~\cite{kohler2020computationally, houska2019robust}) is used to handle realistic approximation errors, for which bounds are obtained via statistical validation.
However, the required robust MPC introduces conservatism and a uniform bound on the approximation error can be hard to achieve in practice.}
A similar statistical approach is used in~\cite{karg2021probabilistic} to validate safety.
Due to the probabilistic nature of these approaches, a non-zero probability of faulty inputs remains, and safety is not deterministically guaranteed in closed loop.

\textbf{NN as warm start:}
Using NN predictions as warm starting solutions to numerical solvers potentially reduces computation time, while the solver guarantees constraint satisfaction, meaning that standard MPC guarantees apply.
A speedup by a factor of two is indeed observed for linear systems with active set solvers in~\cite{klauvco2019machine} and for large-scale systems in~\cite{chen2022large}.
However, this method still requires online optimization, for which worst case computation time bounds can be long.
\rev{Furthermore, in~\cite{vaupel2020accelerating}, warm starting with NN predictions was observed to increase computation time in non-linear numerical examples.}

\textbf{Safety filters:}
Safety enforcing methods aim to modify the outputs of the NN by finding the closest safe inputs.
\rev{For linear systems, one can consider an \textit{explicit} safety filter
 to ensure  safety~\cite{karg2020efficient,chen2018approximating,paulson2020approximate}, however, application for general, non-linear systems is challenging 
due to required invariant sets~\cite{wabersich2023data}.}
\rev{Similar to non-linear \textit{predictive} safety filters~\cite{wabersich2021predictive},
~\cite{wang2023policy} proposes approximating a complete sequence of inputs of a non-linear~MPC with an~NN and solving an optimization problem online that finds the closest safe trajectory.
Unfortunately, solving online optimization problems can be computationally expensive and does not admit a worst-case computation time bound.
While such predictive safety filters can also be approximated with NNs~\cite{didier2023approximate}, safety guarantees again require uniform approximation error bounds.
Unlike warm starts or safety filters, our method eliminates online optimization altogether, significantly reducing computation time.}\looseness-1

\textbf{Constrained learning:}
\rev{%
Constrained learning methods aim to improve performance and constraint satisfaction when imitating an MPC, 
by considering cost and constraints of the MPC more directly in the training loss~\cite{drgona2020learning, drgovna2022learning,cao2020deep}. %
However, none of these works provides constraint satisfaction guarantees for states not seen during training.
In~\cite{zhang2020near}, constrained learning is used to approximate primal and dual~MPC problems. In the control loop, primal feasibility (safety) and suboptimality are checked, which is also part of our method.
However, results in~\cite{zhang2020near} are valid for linear systems.}
By contrast, our method is applicable to large scale systems and provides deterministic guarantees without the need for a separate backup controller.

\textbf{Suboptimal MPC:}
In non-linear MPC, it is not always necessary to solve optimization problems to optimality.
\rev{As long as the optimized input sequence is feasible and reduces the value function compared to the candidate, the closed-loop system will be persistently feasible and stable~\cite{scokaert1999suboptimal}, which can be enforced in robust suboptimal~MPC by a constraint on cost reduction over the previous time step~\cite{zeilinger2014real} and has some inherent robustness when warm-stared~\cite{allan2017inherent}.
While the setting of suboptimal MPC is quite different to ours, we use a similar idea: We ensure persistent feasibility and convergence by providing a feasible, suboptimal control sequence from an NN that performs at least as well as the candidate sequence.}

\subsection{Contribution}
We propose approximate~MPC with safety augmented NNs (illustrated in \fig\ref{fig:mainfig}), a novel algorithm that provides deterministic guarantees for feasibility and convergence.
In particular, we approximate the optimal open-loop input sequence of a MPC with a function approximator that is simple to evaluate (here, an~NN).
This allows us to check online whether the approximator provides a feasible and stabilizing solution to the~MPC problem, and is thus safe. 
This approach requires a single forward simulation of the system dynamics, a small additional computational cost, but significantly faster than MPC with online optimization.
We use standard MPC techniques online to generate a safe candidate input sequence based on a previous safe input sequence with appended terminal control law.
If the~NN solution is unsafe or does not improve the cost compared to the candidate, the candidate is applied.
This approach ensures constraint satisfaction (safety) and convergence without any assumptions on the approximation quality.

If the~NN approximation is poor, the safety augmentation would often \emph{not} use the NN.
In an additional result, we therefore show that the naive~NN would be safe if a robust MPC is sufficiently well approximated.
We use this in a constructive way by approximating robust MPC schemes.

Numerical examples on \rev{three} non-linear~MPC benchmark systems illustrate that our approach: (i) reduces the overall computational time by one to four orders of magnitude compared to online optimization; and (ii) provides deterministic safety.
These examples also demonstrate the importance of the proposed safety augmentation, as naively (always) applying the~NN solution leads to non-negligible constraint violations.

\subsection{Notation}
The set of integers is denoted by~$\mathbb{I}$, the set of integers from~$0$ to~$N$ by~$\mathbb{I}_{0:N}$, and the set of non-negative integers by~$\mathbb{N}$.
By~$\mathcal{K}_\infty$, we denote the class of functions~$\alpha:\mathbb{R}_{\geq0} \rightarrow \mathbb{R}_{\geq0}$ which are continuous, strictly increasing, satisfy~$\alpha(0)=0$, and are unbound.
For two sets,~$\mathcal{A}$ and~$\mathcal{B}$,~$\mathcal{A}\oplus\mathcal{B}$ is the Minkowski sum and~$\mathcal{A}\ominus\mathcal{B}$ is the Pontryagin difference.
We denote the square diagonal matrix with elements~$v\in\mathbb{R}^n$ on the diagonal as~$\text{diag}(v)\in\mathbb{R}^{n\times n}$.~$I_N$ is the identity matrix of size~$N$.

\section{System description}\label{sec:mpc}
We consider general, non-linear, discrete time systems 
\begin{align}
    x(t+1) = f(x(t),u(t)) \label{eqn:transition}
\end{align}
with state~$x(t)\in\mathbb{R}^{n_\text{x}}$, input~$u(t)\in\mathbb{R}^{n_\text{u}}$, and~$t\in\mathbb{N}$.
We assume~$f$ is continuous and consider compact constraint sets for states~$\mathcal{X}$ and inputs~$\mathcal{U}$.
The objective is to drive the system to \rev{an equilibrium (without loss of generality we use~$x=0$ with~$f(0,0)=0$)} and ensure constraint satisfaction
\begin{align}\label{eqn:constraints}
(x(t),u(t))\in\mathcal{X}\times\mathcal{U}\;\forall\;t\in\mathbb{N}
\end{align} for a large set of initial conditions.
This objective can be achieved by MPC~\cite{rawlings2017model}.

\subsection{MPC problem}\label{sec:mpc:general}
In MPC, a sequence of inputs~$\pmb{u}~=~[u_0^\top,~\dots,~u_{N-1}^\top]^\top\in\mathcal{U}^N$ is computed.
The predicted state when applying~(\ref{eqn:transition}) for~$k$ inputs from the sequence~$\pmb{u}$, starting at~$x$, is~$\phi(k;x;\pmb{u})$.
In MPC, we consider a finite-horizon cost of the form
\begin{align}\label{eqn:nominalcost}
    V(x,\pmb{u})=V_\text{f}(\phi(N;x;\pmb{u}))+\sum_{k=0}^{N-1} \ell(\phi(k;x;\pmb{u}),\rev{u_k}),
\end{align}
with horizon~$N$, non-negative stage cost~$\ell(x,u)$, and non-negative terminal cost~$V_\text{f}(x)$. 
We assume \rev{that $V_\text{f}$ and $\ell$ are continuous, which implies continuity of $V$, as $f$ is continuous.}
We denote the set of feasible input trajectories from a given state~$x$ by
\begin{align}\label{eqn:feasibletrajectories}
\begin{split}
    \mathcal{U}^N(x) = \{ \pmb{u}\in\mathcal{U}^N \mid &\phi(k;x;\pmb{u})\in\mathcal{X}, k\in\mathbb{I}_{0:N-1},\\
    &\phi(N;x;\pmb{u})\in\mathcal{X}_\text{f} \},
\end{split}
\end{align}
where~$\mathcal{X}_\text{f}\subseteq\mathcal{X}$ is the terminal set with~$0\in\text{int}(\mathcal{X}_\text{f})$.
The corresponding nominal MPC formulation is
\begin{align}\label{eqn:nominalmpc}
    V^*(x) =&\min_{\pmb{u}\in\mathcal{U}^N(x)} V(x,\pmb{u}).
\end{align}
The feasible set of~(\ref{eqn:nominalmpc}), \ie the set where~(\ref{eqn:nominalmpc}) has a solution, is denoted by~$\mathcal{X}_\text{feas}$.
Solving this optimization problem for the current value~$x$ gives a corresponding trajectory of optimal inputs~$\pmb{u}$, which we denote by~$\Pi_{\text{MPC}}:\rev{\mathcal{X}} \rightarrow \mathcal{U}^N$.
Applying the first input of this sequence,~$\pmb{u}_0$, to the system yields a control policy of the form~$u=\kappa_\text{MPC}(x)$, which can be applied in closed loop to the system.
We consider the following standard conditions for MPC~\cite{rawlings2017model}:
\begin{assumption}\label{ass:mpcingredients}
There exist a function~$\alpha_\ell\in\mathcal{K}_\infty$ and a \rev{terminal feedback controller}~$K_\text{\normalfont f}:\mathcal{X}_\text{\normalfont f}\rightarrow\mathcal{U}$
such that for any~$x\in\mathcal{X}$,~$u\in\mathcal{U}$:
\begin{itemize}[noitemsep,topsep=0pt,parsep=0pt,partopsep=0pt,leftmargin=10pt]
    \item $\ell(x,u)\geq\alpha_\ell(\norm{x})$,
    \item the terminal set is positively invariant under the terminal controller $x\in\mathcal{X}_\text{\normalfont f} \implies f(x,K_\text{\normalfont f}(x))\in\mathcal{X}_\text{\normalfont f}$,
    \item the terminal cost is a control Lyapunov function for~$x\in\mathcal{X}_\text{\normalfont f}$ \rev{under the terminal feedback control $K_\text{\normalfont f}$, that is} $V_\text{\normalfont f}(f(x,K_\text{\normalfont f}(x)))-V_\text{\normalfont f}(x) \leq -\ell(x,\rev{K_\text{\normalfont f}(x)})$.
\end{itemize}
\end{assumption}
These conditions ensure stability of the closed loop system and persistent feasibility, \ie if a solution to~(\ref{eqn:nominalmpc}) exists at~$t=0$, then it exists for all times~$t\in\mathbb{N}$.
Corresponding proofs alongside constructive methods to determine~$\mathcal{X}_\text{f}$,~$K_\text{f}$,~and~$V_\text{f}$ can be found in~\cite{rawlings2017model}.
Implementing MPC requires online optimization to solve~(\ref{eqn:nominalmpc}), which can be computationally expensive.
We circumvent this issue by approximating the MPC with NNs.

\subsection{Problem formulation: approximate MPC with guarantees}\label{sec:mpc:approximate}
We approximate the MPC via an NN \rev{trained} offline by first generating a dataset of initial conditions and corresponding, optimal input sequences, \ie solutions to~(\ref{eqn:nominalmpc}), and then training the NN in a standard supervised learning fashion~\cite{hertneck2018learning,aakesson2006neural,nubert2020safe,karg2021probabilistic,klauvco2019machine,vaupel2020accelerating}.
We do not make any further assumptions on NN structure or learning procedure (\eg loss function).
The approximation~$\Pi_\text{NN}\approx\Pi_\text{MPC}$ maps a state~$x$ to a sequence of inputs~$\pmb{u}$ with horizon length~$N$ as
$\Pi_\text{NN}: \rev{\mathcal{X}} \rightarrow \mathcal{U}^N$.
The approximation is not exact, ~$\Pi_\text{NN}\neq\Pi_\text{MPC}$. Therefore, naively (always) applying the first input~$\pmb{u}_0$ of~$\pmb{u}(t)=\Pi_\text{NN}(x(t))$ to the system does not guarantee constraint satisfaction; that is, $x(t)\notin\mathcal{X}$ for some~$t\in\mathbb{N}$ when applying $\pmb{u}(t)=\Pi_\text{NN}(x(t))$ and~$x(0)\in\mathcal{X}_\text{feas}$ (see Sec.~\ref{sec:numericalexamples} for examples where naively applying~$\Pi_\text{NN}$ fails).

\textit{Problem statement:}
We seek to develop a safety augmentation such that for a given NN approximation,~$\Pi_\text{NN}$, of an MPC~(\ref{eqn:nominalmpc}), the system is guaranteed to satisfy constraints and asymptotically converge; that is,~$x(t)\in\mathcal{X}$ for all~$t\in\mathbb{N}$ and~$x(t)\rightarrow0$ for~$t\rightarrow\infty$.

\section{Safe online evaluation}\label{sec:online-validation}
In this section, we address the problem stated above by augmenting the NN in a safe online evaluation.
We check online if $\Pi_\text{NN}(x(t))$ is a feasible solution to~(\ref{eqn:nominalmpc}) and provide a deterministically safe feedback policy that chooses between the NN sequence, if it is safe, and a safe fallback candidate.
The safety augmentation further ensures that the closed-loop system converges.
This idea is implemented in \algo\ref{alg:online-validation}.
\begin{algorithm}[H]
\caption{Approximate MPC with safety-augmented NN}\label{alg:online-validation}
 \textbf{Input}: $\Pi_\text{NN}(\cdot)$, $\mathcal{U}^N(\cdot)$, $K_\text{f}(\cdot)$, $V(\cdot,\cdot)$, $\tilde{\pmb{u}}(0)=\pmb{u}_{\text{init}}$ \\
 \vspace{-1.2em}
\begin{algorithmic}[1]
\FORALL{ $t\in\mathbb{N}$ }
    \STATE $x \gets x(t)$ \COMMENT{measure state at time~$t$}\label{alg:online-validation:newstate}
    \STATE $\hat{\pmb{u}}(t) \gets \Pi_\text{NN}(x)$ \COMMENT{evaluate approximation}\label{alg:online-validation:evaluateNN}
    \IF{$\hat{\pmb{u}}(t) \in \mathcal{U}^N(x)$ \COMMENT{check if~$\hat{\pmb{u}}(t)$ is feasible}} \label{alg:online-validation:feasible}
        \STATE $\pmb{u}(t)\gets\mathrm{argmin}_{\pmb{u}(t)\in\{\tilde{\pmb{u}}(t),\hat{\pmb{u}}(t)\}} V(x,\pmb{u}(t))$ \COMMENT{choose min cost}\label{alg:cost-decrease}
    \ELSE
        \STATE $\pmb{u}(t) \gets \tilde{\pmb{u}}(t)$ \COMMENT{keep candidate}
    \ENDIF \label{alg:online-validation:endcandsel}
    \STATE $u(t) \gets \pmb{u}_0(t)$ \COMMENT{apply input to system}
    \STATE $\tilde{\pmb{u}}(t+1) \gets \{\pmb{u}(t)_{1:N-1},K_\text{f}(\phi(N;x;\pmb{u}(t))\}$ \COMMENT{create candidate sequence} \label{alg:online-validation:appendterminal}
\ENDFOR
\end{algorithmic}
\end{algorithm}
\algo\ref{alg:online-validation} validates input sequences predicted by an approximate MPC \rev{$\Pi_\text{NN}$} and provides a safe fallback solution should the approximate solution be rejected.
For each newly measured state~$x(t)$, the NN evaluation gives a predicted input sequence~$\hat{\pmb{u}}(t)$, which is checked for feasibility in line~\ref{alg:online-validation:feasible}, \ie$\hat{\pmb{u}}(t)\in\mathcal{U}^{N}(x)$.
This requires the system dynamics to be integrated forward along the prediction horizon~(\ref{eqn:feasibletrajectories}).
If~$\hat{\pmb{u}}(t)$ is safe, we select the sequence of minimum cost~$\pmb{u}(t)$ from~$\hat{\pmb{u}}(t)$, and the safe candidate~$\tilde{\pmb{u}}(t)$ in line~\ref{alg:cost-decrease}.
Finally, the first input~$\pmb{u}_0(t)$ is applied to the system~(\ref{eqn:transition}) and a new safe candidate is created by shifting~$\pmb{u}(t)$ and appending the terminal controller.

For initialization of \algo\ref{alg:online-validation}, we state the following assumption:
\begin{assumption}\label{ass:initially-feasible}
The system is initialized in state~$x(0)$ with a safe candidate solution~$\pmb{u}_\text{\normalfont init}~\in~\mathcal{U}^{N}(x(0))$ in \algo\ref{alg:online-validation}.
\end{assumption}
\begin{remark}
Possible means of initialization could be: 
(i)~start in a steady state, where $\pmb{u}_\text{\normalfont init}$ is easy to construct;
(ii)~initialize with~$\pmb{u}_\text{\normalfont init}=\Pi_\text{\normalfont MPC}(x(0))$, which can be computed offline if the system is started in a known state~$x(0)$;
(iii)~initialize with~$\pmb{u}_\text{\normalfont init}=\Pi_\text{\normalfont NN}(x(0))$, which satisfies Assumption~\ref{ass:initially-feasible} if $\Pi_\text{\normalfont NN}(x(0))\in\mathcal{U}^N(x(0))$.
A practical alternative would be to use $\pmb{u}_\text{\normalfont init}=\Pi_\text{\normalfont NN}(x(0))$ even if~$\Pi_\text{\normalfont NN}(x(0))\notin\mathcal{U}^N(x(0))$ and always apply~$\Pi_\text{\normalfont NN}(x(t))$ in closed loop until $\Pi_\text{\normalfont NN}(x(t))\in\mathcal{U}^N(x(t))$, from which time on~\alg\ref{alg:online-validation} guarantees safety.
\end{remark}

The following theorem summarizes the closed-loop properties of \algo\ref{alg:online-validation}.
\begin{theorem}\label{theorem_main}
Let Assumptions~\ref{ass:mpcingredients} and~\ref{ass:initially-feasible} hold. Then, the closed-loop system resulting from \algo\ref{alg:online-validation} satisfies $(x(t), u(t)) \in \mathcal{X}\times\mathcal{U}$ for all~$t\in\mathbb{N}$ and converges to the state~$x=0$.
\end{theorem}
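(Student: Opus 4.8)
The plan is to recognize \theo\ref{theorem_main} as an instance of the classical suboptimal-MPC Lyapunov argument (\cf\cite{scokaert1999suboptimal}), using the cost of the \emph{selected} sequence, $\mathcal{V}(t):=V(x(t),\pmb{u}(t))$, as a Lyapunov function for the closed loop. Two properties are established by induction over~$t$: (i) the candidate sequence $\tilde{\pmb{u}}(t)$ built in line~\ref{alg:online-validation:appendterminal} is always feasible, $\tilde{\pmb{u}}(t)\in\mathcal{U}^N(x(t))$; and (ii) $\mathcal{V}$ decreases by at least the current stage cost. Property~(i) yields constraint satisfaction, while~(ii) yields convergence.

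For~(i), the base case $t=0$ is Assumption~\ref{ass:initially-feasible}. For the inductive step, assume $\tilde{\pmb{u}}(t)\in\mathcal{U}^N(x(t))$. In either branch of the conditional the selected sequence is feasible, $\pmb{u}(t)\in\mathcal{U}^N(x(t))$: line~\ref{alg:online-validation:feasible} admits $\hat{\pmb{u}}(t)$ only after an explicit feasibility test, and otherwise falls back to the candidate. I would then exploit the shift structure of $\tilde{\pmb{u}}(t+1)=\{\pmb{u}(t)_{1:N-1},K_\text{f}(\phi(N;x(t);\pmb{u}(t)))\}$: since $x(t+1)=\phi(1;x(t);\pmb{u}(t))$, the constraint-states of $\tilde{\pmb{u}}(t+1)$ (indices $\mathbb{I}_{0:N-1}$) coincide with the tail $\phi(1;x(t);\pmb{u}(t)),\dots,\phi(N;x(t);\pmb{u}(t))$ of the current trajectory, which all lie in $\mathcal{X}$ or in $\mathcal{X}_\text{f}\subseteq\mathcal{X}$, while the appended terminal input keeps the final state in $\mathcal{X}_\text{f}$ by positive invariance (Assumption~\ref{ass:mpcingredients}); all inputs remain in $\mathcal{U}$ since $K_\text{f}$ maps into $\mathcal{U}$. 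Hence $\tilde{\pmb{u}}(t+1)\in\mathcal{U}^N(x(t+1))$, closing the induction and giving $(x(t),u(t))\in\mathcal{X}\times\mathcal{U}$ for all $t$.

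For~(ii), I would telescope the cost. Inserting the shift relations into $V(x(t+1),\tilde{\pmb{u}}(t+1))$ cancels the stage costs of $\pmb{u}(t)$ from index $1$ to $N-1$ against those of $\tilde{\pmb{u}}(t+1)$, and the terminal-cost control-Lyapunov inequality of Assumption~\ref{ass:mpcingredients} (applied with terminal input $u=K_\text{f}(x)$) absorbs the newly appended stage cost, yielding
\[ V(x(t+1),\tilde{\pmb{u}}(t+1)) \leq V(x(t),\pmb{u}(t)) - \ell(x(t),u(t)). \]
The crucial device is the minimum-cost selection in line~\ref{alg:cost-decrease}, which replaces the optimality used in textbook MPC proofs: because $\tilde{\pmb{u}}(t+1)$ is always one of the two sequences compared at the next step, $\mathcal{V}(t+1)\leq V(x(t+1),\tilde{\pmb{u}}(t+1))$, so $\mathcal{V}(t+1)\leq\mathcal{V}(t)-\ell(x(t),u(t))\leq\mathcal{V}(t)-\alpha_\ell(\norm{x(t)})$. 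As $\mathcal{V}$ is non-increasing and bounded below (the stage cost is nonnegative and $V_\text{f}$ is a nonnegative control Lyapunov function), it converges; hence $\mathcal{V}(t)-\mathcal{V}(t+1)\to0$ forces $\alpha_\ell(\norm{x(t)})\to0$, and $\alpha_\ell\in\mathcal{K}_\infty$ gives $x(t)\to0$.

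I expect the main obstacle to be the careful index bookkeeping in the inductive step and the telescoping — in particular, verifying that the shifted trajectory together with the appended terminal input reproduces \emph{exactly} the stage and terminal costs needed for the Assumption~\ref{ass:mpcingredients} inequality to cancel. Conceptually, the only subtlety beyond standard MPC is recognizing that both feasibility and the Lyapunov decrease hinge solely on $\pmb{u}(t)$ being feasible together with the line~\ref{alg:cost-decrease} selection, and never on $\pmb{u}(t)$ being optimal; this is precisely what lets an arbitrary (possibly poor) NN output be admitted without compromising the closed-loop guarantees.
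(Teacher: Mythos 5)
Your proposal is correct and follows essentially the same two-part argument as the paper: induction on feasibility of the shifted candidate $\tilde{\pmb{u}}(t)$ for constraint satisfaction, then the candidate cost-decrease inequality combined with the minimum-cost selection in line~\ref{alg:cost-decrease} for convergence. The only cosmetic difference is the final step, where you invoke monotone convergence of $V(x(t),\pmb{u}(t))$ and let successive differences vanish, while the paper telescopes the stage-cost sum $\sum_{t}\alpha_\ell(\norm{x(t)})$ and uses boundedness of the series; the two conclusions are equivalent.
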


\begin{proof}
In the following, we first prove constraint satisfaction
and then convergence
for all $t\in\mathbb{N}$.

\textbf{Part I:}
Using induction, we show that~$\tilde{\pmb{u}}(t)\in \mathcal{U}^N(x(t))$ for all $t\in\mathbb{N}$.
Induction start: at $t=0$ by Assumption~\ref{ass:initially-feasible}.
Induction step: we know that~$\pmb{u}(t)\in\mathcal{U}^N(x(t))$, since~$\pmb{u}(t)$ is selected only from feasible sequences in lines~\ref{alg:online-validation:feasible} to \ref{alg:online-validation:endcandsel}.
Whenever~$\pmb{u}(t)\in\mathcal{U}^N(x(t))$, Assumption~\ref{ass:mpcingredients} ensures that the safe candidate~$\tilde{\pmb{u}}(t+1)$ constructed in line~\ref{alg:online-validation:appendterminal} is feasible, \ie$\tilde{\pmb{u}}(t+1)\in\mathcal{U}^N(x(t+1))$ (see standard candidate solution in nominal MPC proof~\cite{rawlings2017model}).
It follows by induction that $\pmb{u}\in\mathcal{U}^N(x(t)) \; \forall \; t\in\mathbb{N}$ and therefore $(x(t), u(t))\in\mathcal{X}\times\mathcal{U}$.

\textbf{Part II:}
For the shifted candidate sequence in \alg\ref{alg:online-validation}~line~\ref{alg:online-validation:appendterminal} it holds that
\begin{align}
\begin{split}
V(x(t&+1),\tilde{\pmb{u}}(t+1))\\
=&\sum_{k=1}^{N-1}\ell(\phi(k,x(t),\pmb{u}(t)),\pmb{u}_k(t))\\
&+\ell(\phi(N;x(t);\pmb{u}(t)),K_\text{f}(\phi(N;x(t);\pmb{u}(t)))\\
&+V_\text{f}(\phi(N;x(t+1);\tilde{\pmb{u}}(t+1))).
\end{split}
\end{align}
By Assumption~\ref{ass:mpcingredients}, it also holds that
\begin{align}
\begin{split}
V(&x(t+1),\tilde{\pmb{u}}(t+1))\\
=&V(x(t),\pmb{u}(t)) - \ell(x(t),\pmb{u}_0(t)) \\
&\hspace*{-0.7em}+\ell(\phi(N;x(t);\pmb{u}(t)),K_\text{f}(\phi(N;x(t);\pmb{u}(t))))\\
&\hspace*{-0.7em}-V_\text{f}(\phi(N;x(t);\pmb{u}(t)) + V_\text{f}(\phi(N;x(t+1);\tilde{\pmb{u}}(t+1)))\\
\stackrel{\mathclap{\text{\tiny by Ass.~\ref{ass:mpcingredients}}}}{\leq}& \hspace*{1.5em} V(x(t),\pmb{u}(t))-\ell(x(t),u(t)).
\end{split}
\end{align}
The second candidate~$\hat{\pmb{u}}(t+1)=\Pi_\text{NN}(x(t+1))$ has cost~$V(x(t+1),\hat{\pmb{u}}(t+1))$.
Because of line~\ref{alg:cost-decrease} in \algo\ref{alg:online-validation},~$V(x(t+1),\hat{\pmb{u}}(t+1))$ must be less than or equal to ~$V(x(t+1),\tilde{\pmb{u}}(t+1))$ to be selected:
\begin{align}
\begin{split}
V(&x(t+1),\pmb{u}(t+1)) \\
=&\min\left(V(x(t+1),\hat{\pmb{u}}(t+1)),V(x(t+1),\tilde{\pmb{u}}(t+1))\right)\\
\leq& V(x(t+1),\tilde{\pmb{u}}(t+1))\\
\leq& V(x(t),\pmb{u}(t))-\ell(x(t),u(t)).
\end{split}
\end{align}
By definition,~$V(x(0),\pmb{u}_\text{init})<\infty$.
Using the telescopic sum, it follows for the closed-loop system under \algo\ref{alg:online-validation} that
\begin{align}\label{eqn:cl-val-ineq}
\begin{split}
    \sum_{t=0}^{T-1} \alpha_\ell(&\norm{x(t)}) \\
    \stackrel{\mathclap{\text{by Ass.~\ref{ass:mpcingredients}}}}{\leq}&\hspace*{1.5em}\sum_{t=0}^{T-1}\ell(x(t),u(t)) \\
    \leq& \sum_{t=0}^{T-1} V(x(t),\pmb{u}(t))-V(x(t+1),\pmb{u}(t+1)) \\
    \stackrel{\mathclap{\text{tel. sum}}}{=}&\hspace*{1em}V(x(0),\pmb{u}_\text{init})-V(x(T),\pmb{u}(T))<\infty,
\end{split}
\end{align}
with~$V$ non-negative because~$\ell$ and~$V_\text{f}$ are positive definite.
The left-hand side of inequality~(\ref{eqn:cl-val-ineq}) is a series that is non-decreasing and bounded.
It therefore has to converge as~$T\rightarrow\infty$.
Hence,~$\lim_{t\rightarrow\infty}\norm{x(t)}=0$.
\end{proof}

\begin{remark}
The approximate policy~$\Pi_{\text{\normalfont NN}}$ together with \algo\ref{alg:online-validation} can be seen as a form of suboptimal MPC. \algo\ref{alg:online-validation} ensures that the chosen sequence reduces the cost function.
The guarantee for convergence can be strengthened to asymptotic stability by only applying the terminal controller or replacing the candidate with the terminal controller once the state reaches the terminal set, see \cite{allan2017inherent}. 
\end{remark}

Compared to naively applying the first input from~$\Pi_{\text{NN}}$ at every time step in closed loop, \algo\ref{alg:online-validation} results in additional computational cost due to a single forward simulation of the predicted input sequence (required by line~\ref{alg:online-validation:feasible}).
However, this computational cost for integrating system dynamics along the prediction horizon is considerably lower than online optimization of MPC or predictive safety filter problems~\cite{wabersich2021predictive}, which require multiple iterations, each involving at least one simulation of the system.
With our approach, we find a middle ground between full online computation (online optimization) and purely offline (naive NN) computation.
For only a minor increase in computational cost, we achieve \emph{deterministic} guarantees, in contrast to other approximate MPCs without guarantees~\cite{carius2020mpc,reske2021imitation} or probabilistic guarantees~\cite{hertneck2018learning,nubert2020safe,karg2021probabilistic}.
\rev{The presented result directly extends to reference tracking formulations~\cite{kohler2019nonlinear, nubert2020safe}, which we illustrate in our numerical examples.}

\section{Approximating robust MPCs}\label{sec:mpc:robust}
The proposed approximate~MPC with safety augmentation~(\sect\ref{sec:online-validation}) guarantees deterministic safety by applying a safe candidate if the NN approximation gives an input sequence that is infeasible for the~MPC problem~(\ref{eqn:nominalmpc}).
In this section, we provide an additional result deriving sufficient conditions for a safe NN approximation.
To this end, we first introduce a robust~MPC (Sec.~\ref{sec:rmpc}) and then obtain guarantees for approximating the robust MPC in (Sec.~\ref{sec:approx-rmpc}).
In the numerical examples~(Sec.~\ref{sec:numericalexamples}), we use this approach in a constructive way: by increasing the robustness to the MPC, the approximation yields a feasible trajectory more often.

\subsection{Robust MPC}\label{sec:rmpc}
We consider a perturbed input signal $\pmb{u}=\bar{\pmb{u}}+\pmb{d}_\text{w}$, with $\pmb{d}_\text{w}\in\mathcal{B}_\epsilon^N$ and $\mathcal{B}_\epsilon:=\{d_\text{w}\in\mathbb{R}^{n_\text{u}}\mid\norm{d_\text{w}}_\infty\leq\epsilon\}$, 
and a robust MPC scheme that guarantees stability and constraint satisfaction under such disturbances.
The robust MPC formulation is given by
\begin{align}\label{eqn:rmpc-optim}
    V^*(x) =&\min_{\bar{\pmb{u}}\in\bar{\mathcal{U}}^N(x)} V(x,\bar{\pmb{u}}),
\end{align}
where~$\bar{\mathcal{U}}^N(x) \subseteq \mathcal{U}^N(x)$ is the set of \textit{robustly} feasible input trajectories from a given state~$x$ resulting from the robustification.
The feasible set of~(\ref{eqn:rmpc-optim}) is~$\bar{\mathcal{X}}_\text{feas}$.
Solving this optimization problem for a current value~$x$ yields a corresponding sequence of inputs~$\bar{\pmb{u}}$, which we denote by
$\Pi_{\text{RMPC}}:\mathbb{R}^{n_\text{x}} \rightarrow \mathcal{U}^N$.
This sequence is also a suboptimal solution to (\ref{eqn:nominalmpc}).
The goal of the robustification is to find~$\bar{\mathcal{U}}^N(x)$, such that for any~$x\in\bar{\mathcal{X}}_\text{feas}$, it holds that
\begin{align}\label{ass:rmpc}
    \Pi_{\text{RMPC}}(x)\oplus \mathcal{B}_\epsilon^N\subseteq \mathcal{U}^N(x).
\end{align}

In the following, we briefly outline a simple constraint tightening method to satisfy~(\ref{ass:rmpc}), using Lipschitz continuity of the dynamics~\cite{manzano2020robust,marruedo2002input}.
To this end, we define the influence of the input disturbance on the state~$\tilde{\epsilon}:=\max_{(x,u)\in \mathcal{X}\times \mathcal{U}, d_{\text{w}}\in \mathcal{B}_\epsilon} \|f(x,u+d_\text{w})-f(x,u)\|_\infty$ and 
$\mathcal{B}_{\tilde{\epsilon}}:=\{\tilde{d}_\text{w}\in\mathbb{R}^{n_\text{x}}\mid\|\tilde{d}_\text{w}\|_\infty\leq\tilde{\epsilon}\}$.
The tightened constraints are
\begin{align}
\bar{\mathcal{U}} = \mathcal{U}\ominus \mathcal{B}_\epsilon, \quad \bar{\mathcal{X}}_k = \mathcal{X}\ominus c_k \mathcal{B}_{\tilde{\epsilon}}, \quad c_k = \sum_{i=0}^{k-1} L_{\text{f}}^i,
\end{align}
where $L_{\text{f}}$ is the Lipschitz constant of $f$ w.r.t. $x$, \ie $\|f(x_1,u)-f(x_2,u)\|_\infty\leq L_\text{f}\|x_1-x_2\|_\infty,\;\forall\;x_1,x_2\in\mathcal{X}, u\in\mathcal{U}$.
The feasible set in~(\ref{eqn:rmpc-optim}) is
\begin{align}
\begin{split}
\bar{\mathcal{U}}^N(x)=\{\bar{\pmb{u}}\in\bar{\mathcal{U}}^N\mid&\phi(k;x;\bar{\pmb{u}})\in\bar{\mathcal{X}}_k, k\in\mathbb{I}_{0:N-1},\\ &\phi(N;x;\bar{\pmb{u}})\in\bar{\mathcal{X}}_\text{f}\},
\end{split}
\end{align}
with an appropriate terminal set $\bar{\mathcal{X}}_\text{f}\subseteq\mathcal{X}_\text{f}$, see \cite[Thm.~1]{marruedo2002input}.
This Lipschitz-based method can be improved, \eg using a more general feedback and tube parametrization \cite{kohler2020computationally, sasfi2023robust}, which we also use in the examples in Sec.~\ref{sec:numericalexamples}.
Many other robust MPC methods to satisfy~(\ref{ass:rmpc}) also exist,~\eg\cite{houska2019robust}.

\subsection{Guarantees for approximations of robust MPC}\label{sec:approx-rmpc}
In this section, we use a robust~MPC (see Sec.~\ref{sec:rmpc}) to train an NN~$\Pi_\text{NN}$ using supervised learning as before.
Then, we provide conditions for the safety augmentation~(\alg\ref{alg:online-validation}) to not reject~$\Pi_\text{NN}$.
\begin{lemma}\label{lemma:epsilonapproximation}
Let~$\Pi_\text{\normalfont RMPC}$ satisfy~(\ref{ass:rmpc}) and suppose the approximation error of the NN satisfies~$\norm{\Pi_\text{\normalfont NN}(x)-\Pi_\text{\normalfont RMPC}(x)}_\infty\leq\epsilon \; \forall \; x\in\bar{\mathcal{X}}_\text{\normalfont feas}$.
Then, for all~$x\in\bar{\mathcal{X}}_\text{\normalfont feas}$:~$\Pi_\text{\normalfont NN}(x)\in\mathcal{U}^N(x)$.
\end{lemma}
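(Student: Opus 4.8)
The plan is to show that the claim is an almost immediate consequence of the robustness condition~(\ref{ass:rmpc}), once the approximation-error bound is reinterpreted as a disturbance lying in the set~$\mathcal{B}_\epsilon^N$. First, I would fix an arbitrary~$x\in\bar{\mathcal{X}}_\text{feas}$ and define the residual~$\pmb{d}:=\Pi_\text{NN}(x)-\Pi_\text{RMPC}(x)\in\mathbb{R}^{Nn_\text{u}}$. By the hypothesis of the lemma, this residual satisfies~$\norm{\pmb{d}}_\infty\leq\epsilon$ directly.

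The key (and only mildly technical) step is to verify that this stacked residual lies in the disturbance set~$\mathcal{B}_\epsilon^N$. Writing~$\pmb{d}=[d_0^\top,\dots,d_{N-1}^\top]^\top$ with blocks~$d_k\in\mathbb{R}^{n_\text{u}}$, I would use that the infinity norm of the full stack equals~$\max_{k}\norm{d_k}_\infty$; hence~$\norm{\pmb{d}}_\infty\leq\epsilon$ forces~$\norm{d_k}_\infty\leq\epsilon$ for every~$k\in\mathbb{I}_{0:N-1}$, which is exactly the statement that~$d_k\in\mathcal{B}_\epsilon$ for each~$k$, i.e.~$\pmb{d}\in\mathcal{B}_\epsilon^N$. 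I would then rewrite~$\Pi_\text{NN}(x)=\Pi_\text{RMPC}(x)+\pmb{d}$, so that by the definition of the Minkowski sum~$\Pi_\text{NN}(x)\in\Pi_\text{RMPC}(x)\oplus\mathcal{B}_\epsilon^N$. Invoking~(\ref{ass:rmpc}), which applies precisely because~$x\in\bar{\mathcal{X}}_\text{feas}$, yields~$\Pi_\text{RMPC}(x)\oplus\mathcal{B}_\epsilon^N\subseteq\mathcal{U}^N(x)$ and therefore~$\Pi_\text{NN}(x)\in\mathcal{U}^N(x)$. Since~$x$ was arbitrary, the claim holds for all~$x\in\bar{\mathcal{X}}_\text{feas}$.

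I expect essentially no obstacle here, as the lemma is a definitional consequence of the robustification design. The single point that warrants care is the norm-versus-product-set bookkeeping in the middle step: the scalar bound~$\norm{\Pi_\text{NN}(x)-\Pi_\text{RMPC}(x)}_\infty\leq\epsilon$ on the \emph{stacked} input sequence must be shown equivalent to a per-step bound, which is exactly what the product set~$\mathcal{B}_\epsilon^N$ encodes. This equivalence relies crucially on the use of the infinity norm, under which the bound decouples across the horizon; a differently chosen norm on the stacked vector would not separate into the blockwise constraints defining~$\mathcal{B}_\epsilon^N$ so cleanly, and the argument would require an additional norm-comparison estimate.
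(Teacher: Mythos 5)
Your proposal is correct and follows essentially the same argument as the paper's proof: define the residual $\pmb{d}(x)=\Pi_\text{NN}(x)-\Pi_\text{RMPC}(x)$, observe it lies in $\mathcal{B}_\epsilon^N$, and invoke the robustness inclusion~(\ref{ass:rmpc}) to conclude $\Pi_\text{NN}(x)\in\mathcal{U}^N(x)$. The only difference is that you explicitly spell out the blockwise decomposition of the stacked infinity-norm bound into per-step bounds, a bookkeeping step the paper treats as immediate.
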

\begin{proof}
Let~$\pmb{d}(x)=\Pi_\text{NN}(x)-\Pi_\text{RMPC}(x)$, then $\pmb{d}(x)\in\mathcal{B}_\epsilon^N \forall x\in\bar{\mathcal{X}}_\text{feas}$. As~$\Pi_{\text{RMPC}}$ satisfies~(\ref{ass:rmpc}), it follows that~$\Pi_\text{NN}(x) = \Pi_\text{RMPC}(x)+\pmb{d}(x)\in\bar{\mathcal{U}}^N(x)\subseteq\mathcal{U}^N(x)$, for which line~\ref{alg:online-validation:feasible} in \algo\ref{alg:online-validation} is always satisfied.
\end{proof}

\lem\ref{lemma:epsilonapproximation} allows us to design an approximate MPC that -- under \algo\ref{alg:online-validation} -- will often be applied to the system and therefore mitigate potentially long periods of simply replaying the safe candidate.
We emphasize that deterministic safety holds by \theo\ref{theorem_main} even if the conditions in \lem\ref{lemma:epsilonapproximation} do not hold.
The $\epsilon$ approximation of~$\Pi_\text{RMPC}$ is similar to prior work~\cite{hertneck2018learning, nubert2020safe}, where the~$\epsilon$ approximation is validated offline by performing simulations in closed loop with an NN controller, giving probabilistic safety guarantees directly.
In our setting, we do not require the uniform~$\epsilon$ approximation error bound everywhere, because safety is already guaranteed by \algo\ref{alg:online-validation}, making it considerably simpler to find an approximator for large systems.
In comparison to~\cite{hertneck2018learning,nubert2020safe}, \lem\ref{lemma:epsilonapproximation} serves as a sufficient condition, \eg on a (sub)set of sampled states.
We can use it in a constructive way to find ``good'' approximations~$\Pi_\text{NN}$ in our numerical examples.

\begin{remark}
\lem\ref{lemma:epsilonapproximation} does not guarantee that the proposed \algo\ref{alg:online-validation} will always apply the NN solution, because the candidate in line~\ref{alg:cost-decrease} could have preferable cost. However, it is safe to always apply the NN prediction in line~\ref{alg:cost-decrease}.
\end{remark}

\rev{\subsection{Robustness to bounded model-mismatch}
The previous subsection established that, if the approximation $\Pi_{\mathrm{NN}}$ reproduces a robust MPC solution to a uniform error of~$\epsilon$, then the sequence is guaranteed to lie in the nominal feasible set $\mathcal U^N(x)$.
In practice, however, additional robustness to bounded external disturbances~$\pmb{d}_\text{ext}$, thus an increased~$\epsilon$ beyond the approximation error $\|\pmb{d}_\text{NN}(x)\|$ might be desired.

\begin{lemma}\label{lemma:robustapproximation}
Let~$\Pi_\text{\normalfont RMPC}$ satisfy~(\ref{ass:rmpc}) and suppose the approximation error of the NN satisfies~$\norm{\Pi_\text{\normalfont NN}(x)-\Pi_\text{\normalfont RMPC}(x)}_\infty\leq\epsilon_\text{\normalfont{app}}<\epsilon$.
Then, for every external disturbances~$\pmb{d}_\text{\normalfont{ext}}(x)\in\mathcal{B}_{\epsilon_\text{\normalfont{ext}}}^N$ with~$\epsilon_\text{\normalfont{ext}}=\epsilon-\epsilon_\text{\normalfont{app}}$, we have $\Pi_\text{\normalfont{NN}}(x)+\pmb{d}_\text{\normalfont{ext}}(x)\in\mathcal{U}^N(x)$.
\end{lemma}
\begin{proof}
Let~$\pmb{d}_\text{app}(x)=\Pi_\text{NN}(x)-\Pi_\text{RMPC}(x)$.
Because $\epsilon_\text{ext} = \epsilon - \epsilon_\text{NN}$, we have $\mathcal{B}_{\epsilon_\text{app}}\oplus\mathcal{B}_{\epsilon_\text{ext}}=\mathcal{B}_{\epsilon}$.
Hence, $\Pi_\text{app}(x) + \pmb{d}_\text{ext}(x) = \Pi_\text{RMPC}(x) + \pmb{d}_\text{NN}(x) + \pmb{d}_\text{ext}(x) \in \Pi_{\text{RMPC}}(x)\oplus\mathcal{B}_\epsilon^N\subseteq\mathcal{U}_{N}(x)$. 
\end{proof}

We use Lemma~\ref{lemma:robustapproximation} by checking in Algorithm~\ref{alg:online-validation}, if the candidate sequence is in fact the solution to an $\epsilon_\text{ext}$ robust MPC problem.
Thus, the safety-augmented approximate MPC can treat bounded input disturbances.
}

\section{Numerical examples}\label{sec:numericalexamples}
In this section, we demonstrate our method on \rev{three} numerical example systems for non-linear MPC: a continuous stir tank reactor~\cite{mayne2011tube, hertneck2018learning}, a quadcopter model~\cite{kohler2020computationally,bouffard2012onboard}\rev{, and a robotic arm~\cite{nubert2020safe}}.
First, the system dynamics and MPC formulations are summarized.
The NNs are trained using supervised learning by generating optimal input sequences of a robust MPC design (Sec.~\ref{sec:mpc:robust}) over a large set of random initial states.
Finally, we compare our proposed approach, approximate MPC via safety-augmented NNs (Sec.~\ref{sec:online-validation}), with online optimization and naively (always) applying the NN approximation.
We benchmark the safety augmented NN with respect to computation time of online optimization and constraint satisfaction of the naive NN.
The code for all examples is available at~\url{https://github.com/hshose/soeampc.git}.

\subsection{Implementation details}
We use a terminal set~$\mathcal{X}_\text{f} = \{x\mid\|P_\text{f}^{0.5}x\|\leq\alpha\}$, terminal cost~$V_\text{f}=x^\top P_\text{f} x$, and terminal controller~$u=K_\text{f}x$, with parameters computed as in~\cite{kohler2019nonlinear}.
For simple robustification with respect to (hypothetical) input disturbances~$d_\text{w}$, we use a tube defined by~$\dot{s} = -\rho s + \bar{w}$ and~$s(0) = 0$, where~$\rho$ is a contraction rate and~$\bar{w}$ is a disturbance bound.
Here, $|d_\text{w}|$ is a design parameter that we use to facilitate learning a feasible solution.
The terminal set is tightened such that at time~$N$ the tube is inside the nominal terminal set.
The tube parameters~$\rho$ and~$\bar{w}$, stabilizing feedback~$K_\delta$, and constraint tightening factors~$c_j$ are computed following~\cite{kohler2020computationally,sasfi2023robust, nubert2020safe}.
The stabilizing feedback~$u=K_\delta x+v$, where $v$ is the optimization variable, is substituted directly into the MPC formulation.
The offline dataset generation \rev{for the stir tank and quadcopter} is performed using the acados toolbox~\cite{verschueren2022acados} with HPIPM as quadratic program (QP) solver, \rev{and for the robot arm using CasADi and IPOPT.}
\rev{We provide continuous time dynamics for our examples, which are discretized using acados implicit integrators for stir tank and quadcopter and explicit first order integration for the robot arm.}
\rev{We choose fully connected feedforward NNs such that the size of the approximator does not limit performance}.

\subsection{Benchmark systems}
In the following, we provide details of the dynamics and parameters of the benchmark systems.

\subsubsection{Stir tank reactor}
The non-linear stir tank reactor with~$x\in\mathbb{R}^2$ and~$u\in\mathbb{R}$ is used as a benchmark in previous work on MPC~\cite{mayne2011tube, hertneck2018learning}.
The continuous-time dynamics are
\begin{align}
    \dot{x} =
    \begin{bmatrix}
        \frac{1-x_1}{\theta}-kx_1\exp{(-\frac{M}{x_2})} \\
        \frac{x_\text{f}-x_2}{\theta} + k x_1\exp{(-\frac{M}{x_2})}- \gamma u(x_2-x_c))
    \end{bmatrix}
\end{align}
with parameters $\theta=20$, $k=300$, $M=5$, $x_\text{f}=0.3947$, $x_\text{c}=0.3816$, and~$\gamma=0.117$.
The goal is to drive the system to~$x_\text{e}=[0.2632, 0.6519]^\top$ and~$u_\text{e}~=~0.7853$.
The constraints are~$\|x-x_{\text{e}}\|_\infty\leq0.2$ and~$0\leq u\leq2$.
Initial states are sampled from the constraint set.
The sampling time is~$T_\text{s}=2$, and we use a horizon of~$N=10$ steps.
For the MPC, we choose quadratic cost with weights~$Q=I_2$ and~$R=10^{-5}$.
The terminal ingredients are computed as in~\cite{hertneck2018learning} with ~$|d_{w}|\leq1\cdot10^{-4}$.
We fit an NN with~\num{2} hidden layers,~\num{50} neurons per layer.

\subsubsection{Quadcopter}
We use a ten-state quadcopter model with an RMPC formulation as in~\cite{kohler2020computationally}.
We use the original system dynamics and parameters from~\cite{bouffard2012onboard}.
These dynamics are
\begin{align}\begin{split}
\dot{x}_i &= v_i, \, i\in\mathbb{I}_{1:3}, \quad 
\dot{v}_i = g \tan{(\phi_i)}, \, i\in\mathbb{I}_{1:2}, \\
\dot{v}_3 &= -g+ \frac{k_T}{m} u_3, \quad
\dot{\phi}_i=-d_1\phi_i+\omega_i, \, i\in\mathbb{I}_{1:2}, \\
\dot{\omega}_i&=-d_0\phi_i+n_0 u_i, \, i\in\mathbb{I}_{1:2},
\end{split}\end{align}
with state~$x=[x_1, x_2, x_3, v_1, v_2, v_3, \phi_1, \omega_1, \phi_2, \omega_2]^\top\in\mathbb{R}^{10}$ and input~$u=[u_1,u_2,u_3]^\top\in\mathbb{R}^3$.
The steady-state input is~$u_{\text{e},3}=\frac{gm}{k_\text{t}}$.
The constants are~$d_0=80$, $d_1=8$, $n_0=40$, $k_\text{T}=0.91$, $m=1.3$, and $g=9.81$.
Constraints are $x_1 \leq 0.145$ (a wall), $|\phi_{1,2}|\leq\frac{\pi}{9}$, $|u_{1,2}|\leq\frac{\pi}{4}$, and $u_3\in[0,2g]$.
The quadratic state and input cost terms have weights~$Q=\text{diag}([20\cdot1_3, 1_3, 0.01\cdot1_4])$ and~$R=\text{diag}([8,8,0.8])$.
Initial conditions are randomly sampled from
$|x_{1,2}|\leq2.5$, $x_1\leq0.145$, $|x_3|\leq3$, $|v_{1,2}|\leq3$, $|v_3|\leq5$,~$|\phi_{1,2}|\leq\frac{\pi}{9}$, and~$|\omega_{1,2}|\leq3\pi$.
The sampling time is~$T_\text{s}=0.1s$ with a horizon~$N=10$.
The terminal set and controller are computed by solving LMIs as described in detail in~\cite{kohler2019nonlinear,kohler2020computationally}.
We choose~$\|d_{w}\|_\infty\leq5\cdot10^{-3}$ and
fit an NN with 6 hidden layers and $[200,400,600,600,400,200]$ neurons.
\rev{
\subsubsection{Robot arm} We use a Franka robot arm\footnote{The robot has seven joints, we lock the third joint to avoid set-valued MPC solutions.} with~$q\in\mathbb{R}^6$, state~$x=[q^\top, \dot{q}^\top]^\top$, and action $u=\ddot{q}$. While the system dynamics are linear in joint space, the cartesian position and orientation of the end-effector depend on the forward kinematics,~$y=\text{FK}(x)$, which are highly nonlinear.
We impose box constraints on~$q$, $\dot{q}$, and $\ddot{q}$ according to datasheet limits and additionally avoid end-effector collisions with the environment via an interpolated signed-distance function constraint, $\text{SDF}(y)\geq d_\text{min}$.
Quadratic state, input, and output tracking cost terms have weights~$Q=\text{diag}([1_6, 10^{-4}\cdot1_6])$, $R=\text{diag}([10^{-4}\cdot1_6])$, and $Q_s=\text{diag}([100\cdot 1_7])$.
We choose a discretization time~$T_s=0.05s$ with a horizon~$N=20$.
Details on the MPC formulation and robustification are given in~\cite{nubert2020safe}.
We fit an NN with 9 hidden layers and 100 neurons per layer.
}

\begin{figure*}
    \begin{subfigure}[c]{0.4\textwidth}
    \begin{tikzpicture}[spy using outlines={circle, magnification=2,connect spies}]

    \definecolor{darkgray176}{RGB}{176,176,176}
    \definecolor{darkgreen}{RGB}{0,100,0}
    \definecolor{darkcyan}{RGB}{139,0,0}
    \definecolor{darkturquoise0191191}{RGB}{0,191,191}
    \definecolor{darkviolet1910191}{RGB}{191,0,191}
    \definecolor{goldenrod1911910}{RGB}{191,191,0}
    \definecolor{green01270}{RGB}{0,127,0}
    \definecolor{lightgray204}{RGB}{204,204,204}
    \definecolor{lightgray230}{RGB}{230,230,230}
    \definecolor{navy}{RGB}{0,0,128}
    
    \begin{groupplot}[group style={group size=1 by 2, vertical sep=0.4cm},
    every axis/.append style={
    font=\footnotesize
    },]
    \nextgroupplot[
    height=1.1in,
    legend cell align={left},
    legend style={
      fill opacity=1,
      draw opacity=1,
      text opacity=1,
      at={(1.05,0.5)},
      anchor=west,
      draw=lightgray204
    },
    tick align=outside,
    tick pos=left,
    width=2.8in,
    x grid style={lightgray230},
    xmajorgrids,
    xmin=0, xmax=1.7,
    xtick={0,0.5,1,1.5},
    xtick style={color=black},
    xticklabels=\empty,
    y grid style={lightgray230},
    ymajorgrids,
    ymin=-0.86393767082404, ymax=0.86393767082404,
    ytick style={color=black},
    ylabel={\(\displaystyle \textcolor{cyan}{u_1},\textcolor{darkgreen}{u_2} \)},
    ]
    \addplot [semithick, black, cyan, const plot mark right]
    table {%
    0 -0.159612354037842
    0.1 -0.159612354037842
    0.2 -0.647569589381573
    0.3 -0.636512983450999
    0.4 -0.101933264509976
    0.5 -0.752824279144703
    0.6 -0.785397882567309
    0.7 -0.785397882567309
    0.8 -0.785397882567309
    0.9 -0.777348718547889
    1 -0.370602343031474
    1.1 -0.264306616732982
    1.2 -0.200113072727079
    1.3 -0.117347502937007
    1.4 -0.0324578513812547
    1.5 0.0289649181709748
    1.6 0.061083656164881
    1.7 0.0724861966083692
    };
    \addplot [semithick, cyan, mark=*, mark size=2, mark options={solid}, only marks, forget plot]
    table {%
    0.6 -0.785397882567309
    0.7 -0.785397882567309
    0.8 -0.777348718547889
    };
    \addplot [semithick, green01270, const plot mark right]
    table {%
    0 0.785397882567309
    0.1 0.785397882567309
    0.2 0.785397882567309
    0.3 0.60325578373237
    0.4 -0.189979313009061
    0.5 -0.209220054396446
    0.6 -0.0707879100515836
    0.7 -0.00724493575070101
    0.8 -0.0359201981643486
    0.9 -0.0718982662270833
    1 -0.0973341937607215
    1.1 -0.0809703913348699
    1.2 -0.0638842475126794
    1.3 -0.0547441117759459
    1.4 -0.0495576361237228
    1.5 -0.04420407435846
    1.6 -0.0375772162541284
    1.7 -0.0307038445540006
    };
    \addplot [thick, darkgray, const plot mark right, densely dotted]
    table {%
    0 -0.159612354037842
    0.1 -0.159612354037842
    0.2 -0.647569589381573
    0.3 -0.636512983450999
    0.4 -0.101933264509976
    0.5 -0.752824279144703
    0.6 -0.785397882567309
    0.7 -0.785397882567309
    0.8 -0.780222929482828
    0.9 -0.737199890034967
    1 -0.487860520640103
    1.1 -0.30280095268667
    };
    \addplot [thick, darkgray, const plot mark right, densely dotted]
    table {%
    0 0.785397882567309
    0.1 0.785397882567309
    0.2 0.785397882567309
    0.3 0.60325578373237
    0.4 -0.189979313009061
    0.5 -0.209220054396446
    0.6 -0.0707879100515836
    0.7 -0.014873831093051
    0.8 -0.0402654372015852
    0.9 -0.0758336984914325
    1 -0.08731219078837
    1.1 -0.0773585646397204
    };
    \addplot [semithick, green01270, mark=*, mark size=2, mark options={solid}, only marks, forget plot]
    table {%
    0.6 -0.00724493575070101
    0.7 -0.0359201981643486
    0.8 -0.0718982662270833
    };
    \addplot [semithick, cyan, opacity=0.7, dashed, forget plot]
    table {%
    0 -0.785397882567309
    1.7 -0.785397882567309
    };
    \addplot [semithick, cyan, opacity=0.7, dashed, forget plot]
    table {%
    0 0.785397882567309
    1.7 0.785397882567309
    };
    \addplot [semithick, darkgreen, opacity=0.7, dashed, forget plot]
    table {%
    0 -0.785397882567309
    1.7 -0.785397882567309
    };
    \addplot [semithick, darkgreen, opacity=0.7, dashed, forget plot]
    table {%
    0 0.785397882567309
    1.7 0.785397882567309
    };
    
    \nextgroupplot[
    height=1.1in,
    legend cell align={left},
    legend style={
      fill opacity=1,
      draw opacity=1,
      text opacity=1,
      at={(1.05,0.5)},
      anchor=west,
      draw=lightgray204
    },
    tick align=outside,
    tick pos=left,
    width=2.8in,
    x grid style={lightgray230},
    xmajorgrids,
    xmin=0, xmax=1.7,
    xtick={0,0.5,1,1.5},
    xtick style={color=black},
    y grid style={lightgray230},
    ytick={-0.4, 0, 0.4},
    ymajorgrids,
    ymin=-0.47996558210444, ymax=0.47996558210444,
    ytick style={color=black},
    ylabel={\(\displaystyle \textcolor{cyan}{\phi_1},\textcolor{darkgreen}{\phi_2} \) [rad]},
    xlabel={time [s]},
    ]
    \addplot [semithick, cyan]
    table {%
    0 0.167308879527583
    0.1 0.203527948609371
    0.2 -0.08019287331007
    0.3 -0.360319257240699
    0.4 -0.43618012217706
    0.5 -0.418559832973062
    0.6 -0.397308630522791
    0.7 -0.385337378726638
    0.8 -0.37940241941461
    0.9 -0.356353335179994
    1 -0.274494986354963
    1.1 -0.166548282282944
    1.2 -0.0839558212553997
    1.3 -0.0336377173196301
    1.4 -0.00223759809672037
    1.5 0.0199970139119355
    1.6 0.0349802814949286
    1.7 0.0421347305680515
    };
    \addplot [semithick, green01270]
    table {%
    0 0.179447456311274
    0.1 -0.171006673214548
    0.2 -0.0590943773735548
    0.3 0.13949752812403
    0.4 0.16212963207673
    0.5 0.0505501536615187
    0.6 -0.0379084849684214
    0.7 -0.0581723893576722
    0.8 -0.0443389493323928
    0.9 -0.0320006111152253
    1 -0.0316409723679436
    1.1 -0.0352441459502089
    1.2 -0.0349122016533175
    1.3 -0.0307854126504436
    1.4 -0.0258249109875674
    1.5 -0.0217268925675225
    1.6 -0.0185063205148939
    1.7 -0.0156446796150189
    };
    \addplot [thick, black, densely dotted]
    table {%
    0 0.167308879527583
    0.1 0.203527948609371
    0.2 -0.08019287331007
    0.3 -0.360319257240699
    0.4 -0.43618012217706
    0.5 -0.418559832973062
    0.6 -0.397308630522791
    0.7 -0.384169625723359
    0.8 -0.369104824033479
    0.9 -0.339079128024287
    1 -0.275276989543668
    1.1 -0.185560976471916
    };
    \addplot [thick, black, densely dotted]
    table {%
    0 0.179447456311274
    0.1 -0.171006673214548
    0.2 -0.0590943773735548
    0.3 0.13949752812403
    0.4 0.16212963207673
    0.5 0.0505501536615187
    0.6 -0.0379084849684214
    0.7 -0.0590855572386659
    0.8 -0.0462306515679823
    0.9 -0.0341125086987617
    1 -0.0319013554120063
    1.1 -0.0335216416751173
    };
    \addplot [semithick, cyan, opacity=0.7, dashed, forget plot]
    table {%
    0 0.436332347367672
    1.7 0.436332347367672
    };
    \addplot [semithick, cyan, opacity=0.7, dashed, forget plot]
    table {%
    0 -0.436332347367672
    1.7 -0.436332347367672
    };
    \addplot [semithick, navy, opacity=0.7, dashed, forget plot]
    table {%
    0 0.436332347367672
    1.7 0.436332347367672
    };
    \addplot [semithick, navy, opacity=0.7, dashed, forget plot]
    table {%
    0 -0.436332347367672
    1.7 -0.436332347367672
    };
    \end{groupplot}
    
    \end{tikzpicture}
    
    \end{subfigure}\hspace{1em}
    \begin{subfigure}[c]{0.59\textwidth}
    \begin{tikzpicture}[spy using outlines={circle, magnification=4,connect spies}]

    \definecolor{darkgray176}{RGB}{176,176,176}
    \definecolor{green01270}{RGB}{0,127,0}
    \definecolor{lightgray204}{RGB}{204,204,204}
    
    \begin{axis}[
    every axis/.append style={
    font=\footnotesize
    },
    height=0.7*0.4*5*0.1818895740508146*6.8in,
    legend cell align={left},
    legend style={
      fill opacity=1,
      draw opacity=1,
      text opacity=1,
      at={(1.64,1)},
      anchor=north west,
      draw=lightgray204,
      align=left,
      row sep=5pt
    },
    tick align=outside,
    tick pos=left,
    width=0.7*6.8in*0.4,
    x grid style={darkgray176},
    xlabel={\(\displaystyle x_1\) [m]},
    xmin=0.2*-1.95025518960488, xmax=0.251494896068109,
    xtick style={color=black},
    y grid style={darkgray176},
    ylabel={\(\displaystyle x_2\) [m]},
    ymin=-0.381819069212184, ymax=0.0186563160372191,
    ytick style={color=black}
    ]
    \addplot [semithick, green01270]
    table {%
    -1.8501756402561 -0.363615642609939
    -1.54769992137782 -0.336402823032394
    -1.22760886597594 -0.322170604780663
    -0.915360814867084 -0.313120296252445
    -0.638121168523872 -0.291959318085371
    -0.405881422893838 -0.255870704931321
    -0.217256336444178 -0.214559490790164
    -0.0698946653229272 -0.176378325723328
    0.0375943595731836 -0.143634309188461
    0.106117710641977 -0.115267642624241
    0.138738513163606 -0.0901489884448692
    0.143903807945082 -0.0681648386736342
    0.132315015071025 -0.0496035005658727
    0.112190883155389 -0.0344362612896553
    0.0886126700628682 -0.0222837504924584
    0.0647437317029056 -0.0126728509336233
    0.0427783497046344 -0.00520084969823089
    0.0241808873572855 0.000452889434973518
    };
    \addlegendentry{safety augmented \\NN (ours)}
    \addplot [semithick, green01270, mark=*, mark size=3, mark options={solid}, only marks]
    table {%
    -0.217256336444178 -0.214559490790164
    -0.0698946653229272 -0.176378325723328
    0.0375943595731836 -0.143634309188461
    };
    \addlegendentry{safe candidate \\ solution}
    \addplot [thick, black, densely dotted]
    table {%
    -1.8501756402561 -0.363615642609939
    -1.54769992137782 -0.336402823032394
    -1.22760886597594 -0.322170604780663
    -0.915360814867084 -0.313120296252445
    -0.638121168523872 -0.291959318085371
    -0.405881422893838 -0.255870704931321
    -0.217256336444178 -0.214559490790164
    -0.0698919554367189 -0.176387737513979
    0.0378144513056821 -0.143742629980288
    0.107690780826373 -0.115654429225055
    0.143301108336366 -0.0910012719448919
    0.151415346719337 -0.0695090180561396
    };
    \addlegendentry{naive NN}
    \addplot[patch, patch type=rectangle, gray, solid, line width=0, opacity=0.5, forget plot] coordinates
    {
    (0.145, -1) (0.145, 1) (0.3, 1) (0.3, -1) 
    };
    \addplot [semithick, red, mark=x, mark size=2, mark options={solid}, only marks]
    table {%
    0.151415346719337 -0.0695090180561396
    };
    \addlegendentry{crash}

    \end{axis}

    \coordinate (spypoint) at (2.6,2.15);
    \coordinate (spyviewer) at (3.7,0.5);
    \spy [size=2.6cm] on (spypoint) in node [fill=white] at (spyviewer);
    
    \end{tikzpicture}
    
    \end{subfigure}
    \caption{Exemplary closed-loop simulation of naive and safety augmented NN.
    Naively (always) applying~$\Pi_\text{NN}$ (dotted line) for~\SI{1.1}{\second}, the quadcopter crashes ($\textcolor{red}\times$) into the wall at~$x_1=\SI{0.145}{\meter}$. With the safety-augmented NN presented in \algo\ref{alg:online-validation} (solid), a safe candidate is chosen over unsafe~$\Pi_\text{NN}$ predictions for three time steps~($\bullet$) and the system adheres to constraints.}
    \label{fig:quadcrash}
    \vspace{-1em}
\end{figure*}

\subsection{Simulation Results}
In the following, we present the results for the benchmark systems.
First, performance in terms of computation time is compared with online optimization.
Second, we show that our method is safe in \rev{all} examples, while naively (always) applying the NN leads to constraint violations in closed loop.
To gain some intuition into how well each approximator generalizes, $10^5$ points not seen during training are used to check if the NN provides a feasible solution to~(\ref{eqn:nominalmpc}).
The results are summarized in \tabe\ref{tab:dataoverview}.
In the stir tank and \rev{robot arm} example, the dataset is approximated well and more than~\SI{99}{\percent} of predictions are valid solutions to~(\ref{eqn:nominalmpc}) on test data.
For the quadcopter, an NN trained on a dataset with approximately~\num{5} gridpoints per state dimension yields a feasible solution in only~\SI{72}{\percent} of initial conditions.

\begin{table}
    \caption{Dataset and NN approximator overview. Number of points and cumulative offline computation time for the dataset generation, trainable parameters, and training time for the NN approximator. ``Solutions feasible'' is the percentage of test points, for which the NN gives a feasible solution to~(\ref{eqn:nominalmpc}).\looseness=-1}
    \label{tab:dataoverview}
    \begin{center}
    \begin{tabular}{p{1.3cm} P{0.6cm} P{1.5cm} P{1.0cm} P{1.0cm} P{1.0cm}}
    \toprule
    & \multicolumn{2}{c}{dataset} & \multicolumn{3}{c}{neural network} \\
    \cmidrule(lr){2-3}\cmidrule(lr){4-6}
    & points [$10^6$]
    & computation time\tablefootnote{\label{footnote:claix}in parallel on Intel Xeon Platinum 8160 "SkyLake" CPU at 2.1 GHz} [$10^3$~core-h]
    & parameters
    & training time\tablefootnote{\label{footnote:gpi}on a single NVIDIA A100 40GB GPU} [h]
    & solution feasible [\%]\tablefootnote{\label{footnote:teststates}independent samples from same set as training points, not seen in training}
    \\
    \midrule\midrule
    stir tank   & 0.9  & 0.5  & \num{3e3}  &  4  & 99  \\
    quadcopter  & 9.6  & 4.1  & \num{1e6}  &  37 & 72  \\
    robot arm  & 55.6 & 62.3 & \num{9e4} & 36  & 99 \\
    \bottomrule
    \end{tabular}
    \end{center}
    \vspace{-1em}
\end{table}

\subsubsection{Online computation times}
We evaluate the online computation times\footnote{\label{footnote:laptop}on AMD Ryzen 7 PRO 5850U "Zen3" mobile CPU at 4.4GHz turbo} for the proposed method and compare them with solving the NLP online.
The results of this comparison are summarized in Table~\ref{tab:computetimes}.
Computation times for our method include NN inference implemented in single-precision float using the C++ library Eigen, and forward simulation of the system dynamics using acados implicit fourth order integration \rev{for stir tank and quadcopter, and explicit first order integration for the robot arm.}
The combined online evaluation time is in the order of~\num{0.1} to \SI{0.2}{\milli\second} \rev{for implicit and~\SI{0.02}{\milli\second} for simpler integration.}
This is about four orders of magnitude faster than solving the NLP online and still at least ten times faster than a single solver iteration.
Yet, our method provides deterministic safety while a single SQP iteration might return an infeasible solution to~(\ref{eqn:nominalmpc}).

\begin{table}[bt]
    \vspace{-1em}
    \caption{Online computation times, comparing the proposed method implemented in C++ with online MPC solver acados~\cite{verschueren2022acados} \rev{(stir tank and quadcopter) and CasADi with IPOPT (robot arm)}. Our method is four orders of magnitude faster than solving (\ref{eqn:nominalmpc}) and even ten times faster than a single solver iteration, which is not sufficient to guarantee constraint satisfaction.}
        \label{tab:computetimes}
        \begin{center}
        \begin{tabular}{p{1.1cm} P{1.2cm} P{1.2cm} P{0.7cm} | P{0.45cm} P{0.45cm} | P{0.6cm}}
        \toprule
        & \multicolumn{3}{c}{\bfseries safety augmented NN (ours)}
        & \multicolumn{3}{c}{NLP solver} \\
        \cmidrule(lr){2-4}\cmidrule(lr){5-7}
        & \multirow{2}{1.2cm}{\centering NN inference [ms]}
        & \multirow{2}{1.2cm}{\centering forward simulation [ms]}
        & \multirow{2}{0.7cm}[-1em]{\centering \bfseries total [ms]}
        & \multicolumn{2}{P{1.2cm}}{\centering convergence [ms]}
        & \multirow{2}{0.6cm}{\centering single iter [ms]}
        \\
        \cmidrule(lr){5-6}
        & & & & mean & max & \\
        \midrule\midrule
        stir tank   & 0.001 & 0.082 & \textbf{0.083} & 430  & 1060 & 0.7 \\
        quadcopter  & 0.065 & 0.146 & \textbf{0.211} & 1592 & 9561 & 2.4 \\
        \midrule
        robot arm  & 0.011 & 0.008 & \textbf{0.019} & 600 & 4409 & 20.3 \\
        \bottomrule
        \end{tabular}
        \end{center}
        \vspace{-1em}
\end{table}

\subsubsection{Online constraint satisfaction (safety)}
To evaluate the effectiveness of safety augmentation (\alg\ref{alg:online-validation}), we compare the closed-loop system behavior with naively (always) applying the NN solution.
We randomly sample initial states
where the MPC has a feasible solution, \ie(\ref{eqn:nominalmpc}) is initially feasible.
We simulate the system dynamics and check constraint satisfaction in closed loop.
The results are summarized in \tabe\ref{tab:safevsnaive}.

\begin{table}[bt]
    \caption{Safety augmented vs. naive NN in closed loop. Starting in random initial conditions from the feasible set of the MPC, the naive NN results in constraint violations, while our proposed method is always safe.}
    \label{tab:safevsnaive}
    \begin{center}
    \begin{tabular}{p{1.1cm} P{2.15cm} P{1.28cm} | P{2.5cm} }
    \toprule
    & \multicolumn{2}{P{3.8cm}}{\textbf{[\%] of safe simulations}} 
    \\
    \cmidrule(lr){2-3}
    & naive (NN always)
    & \textbf{safe (ours)}
    & candidate applied [\%]
    \\
    \midrule\midrule
    stir tank   & 99.9 &  \textbf{100} & 48.8 \\
    quadcopter  & 84.8 &  \textbf{100} & 8.2 \\
    robot arm   & 93.5 &  \textbf{100} & 8.5 \\
    \bottomrule
    \end{tabular}
    \end{center}
    \vspace{-2em}
\end{table}

With the proposed safety augmented NNs, the systems adhere to constraints in closed loop -- they are \emph{deterministically} safe.
By contrast, naively applying the NN solution results in constraint violations for \rev{all} systems, underscoring the importance of safety augmentation.
Even when the NN provides an initially feasible solution to~(\ref{eqn:nominalmpc}), \ie $\Pi_\text{NN}(x(0))\in\mathcal{U}^N(x(0))$, it might not be safe to naively evaluate for all times in closed loop.
An illustrative example for catastrophic failure (crash into the wall) when naively applying the NN to the quadcopter system is depicted in Fig.~\ref{fig:quadcrash}: the NN controller is feasible for the first~\SI{0.6}{\second}.
Afterwards, the proposed safety augmentation detects unsafe NN input sequences and applies the safe candidate to safely evade the wall.
The naive implementation always applies the new NN input and crashes into the wall~\SI{0.3}{\second} later.
\rev{Similar behavior is observed for the robot arm in Fig.~\ref{fig:hardware}.}

\begin{table}[bt]
    \vspace{-1em}
    \caption{Reason for applying the safe candidate (can be multiple). In the stir tank reactor system, solutions are mostly rejected for having higher cost, while state constraint violations dominate in the quadcopter example.}
    \label{tab:rejectionreason}
    \begin{center}
    \begin{tabular}{p{1.6cm} P{1.9cm} P{2.4cm} P{1.0cm}}
    \toprule
    & {state constr. [\%] }
    & {terminal constr. [\%] }
    & {cost [\%] }\\
    \midrule\midrule
    stir tank   & 0.2 & 0.0 & \textbf{99.9} \\
    quadcopter  & \textbf{72.5} & 2.6 & 40.6 \\
    robot arm   & \textbf{99.8} & 24.7 & 95.8 \\
    \bottomrule
    \end{tabular}
    \end{center}
    \vspace{-1em}
\end{table}

\begin{table}[bt]
    \vspace{-1em}
    \caption{\rev{In-distribution robustness of the robot arm approximate MPC to external disturbances as percentage of feasible $\Pi_\text{NN}(x)$ over robustness level.}}
    \label{tab:robustness}
    \begin{center}
    \begin{tabular}{p{3cm} P{0.8cm} P{0.8cm} P{0.8cm} P{0.8cm}}
         \toprule
         \textbf{disturbance as frac. of $\bar{w}$}                 & 0\% & 50\% & 75\% & 95\%\\
        \midrule\midrule
         feasible in Alg.~\ref{alg:online-validation} & 99.7 & 98.1 & 94.9 & 84.5 \\
        \bottomrule
    \end{tabular}
    \end{center}
    \vspace{-1em}
\end{table}

\tabe\ref{tab:safevsnaive} also summarizes how often our method rejects the NN prediction and chooses the safe candidate solution in closed loop.
The reasons for rejection are given in \tabe\ref{tab:rejectionreason}, which could be state or terminal constraint violation (safety) in \alg\ref{alg:online-validation} line~\ref{alg:online-validation:feasible}, or failure to decrease the cost (convergence) in line~\ref{alg:cost-decrease}.
A trajectory could be rejected for multiple reasons.
The results in \tabe\ref{tab:rejectionreason} show that all three reasons for rejection might be relevant, depending on the specific application.

\begin{figure}
    \centering
    \begin{subfigure}[c]{0.48\linewidth}
    \centering
    \input{figure/hardware/rendering.tex}
    \end{subfigure}
    \begin{subfigure}[c]{0.48\linewidth}
    \centering
    \input{figure/hardware/cost_dist_to_obst_over_time.tex}
    \vspace{-0.5em}
    \end{subfigure}
    \vspace{-1em}
    \caption{\rev{Franka Panda robot arm environment (left) and closed-loop simulation with velocity-proportional disturbances (right).
    Vertical lines indicate setpoint changes. The safety augmented NN (ours) shown in {\color{blue}blue} avoids collision while the naive NN in {\color{red}red} collides right before the 2s mark. The control performance of the NN controllers is close to the data-generating MPC with 10 solver iterations that take 200ms, which is too slow to run in real-time.}
    }
    \label{fig:hardware}
    \vspace{-1em}
\end{figure}
\rev{
\subsubsection{Robustness of the approximate MPC}
An evaluation of the robustness of the approximate MPC with safety augmentation is provided in Tab.~\ref{tab:robustness} for the robot arm setpoint tracking.
The percentages of input sequences by the NN that are feasible solutions to a~\emph{robust} MPC problem in Alg.~\ref{alg:online-validation} on random test-states\footref{footnote:teststates} are reported for increasing percentages of the disturbance level~$\bar{\omega}$; 100\% corresponds to the MPC used during training.
For example, when imposing robustness to external disturbances with~$75\%$ of $\bar{\omega}$, the safety augmentation  would use~$95\%$ of inputs by the NN -- a drop of only~$5\%$~pt. compared to the nominal case,~$0\%$ of $\bar{\omega}$.
As a result, both approximate (solid) and data-generating MPC with 10 solver iterations (dotted) show good performance even under strong disturbances in closed-loop (Fig.~\ref{fig:hardware}).
Notably, running the MPC solver with less iterations failed in this task.
}

\section{Conclusion}\label{sec:conclusion}
We propose approximate non-linear MPC with safety augmented~NNs, which provides deterministic safety and convergence in closed loop despite approximation errors.
We approximate entire input sequences and check online whether an input sequence by the NN is feasible and improves cost compared to a safe candidate sequence.
This candidate sequence is constructed from the shifted previous trajectory with appended terminal controller.
In \rev{three} numerical examples, we illustrate how our method provides safety and improves overall computation time by four orders of magnitude compared to solving NLPs online with acados.
\rev{During our numerical experiments, we encountered two major limitations: poor out-of-distribution generalization, which is a common issue in learning-based methods, and set-valued MPC solutions (e.g., from local minima), which significantly deteriorate performance with simple regression setups and necessitate generative models~\cite{julbe2025diffusion}.}
\rev{Further, safety augmentation could extend to other learning-based control methods (e.g., reinforcement learning), which requires additional work on these methods as they compute single actions, not action sequences.}

{\renewcommand{\baselinestretch}{0.963}\normalsize
\bibliographystyle{IEEEtran}
\bibliography{references_abbr}
}

\end{document}